\newtheorem{thm}{Theorem}
\newtheorem{lemma}{Lemma}
\newtheorem{comment}{Comment}
\begin{document}

\title[PSPACE hardness of approximating the capacity]{PSPACE hardness of approximating the capacity of time-invariant Markov channels with perfect feedback}

\author[Mukul Agarwal]{Mukul Agarwal}

%\section{Acknowledgment}
%The author thanks Prof. John Tsitsiklis for insightful discussions. The author also thanks Dr. Tom Richardson and Prof. Sanjoy Mitter for comments on the writing of the paper and other technical discussions.

\begin{abstract}
It is proved that approximating the capacity of a time-invariant  Markov channel with perfect feedback is PSPACE hard.
\end{abstract}

\maketitle

\section{Introduction}

In this paper, it is proved that approximating the capacity of a Markov channel with perfect feedback is PSPACE-hard. By `approximating,' we mean, computing the capacity within a certain given additive error $e$. A class of channels will be constructed for which it will be proved that approximating the capacity to within $0.1$ bits of the correct capacity is PSPACE-hard.

The observation that approximating the capacity of a Markov channel with perfect feedback can be formulated as a stochastic dynamic programming problem with partial observations thereby connecting it to the work of Tsitsiklis and Papadimitriou on the complexity of Partially Observed Markov decision processes \cite{TsPaCo} is what is novel here along with the result.

The authors of \cite{TsPaCo} demonstrate that the complexity of Markov decision processes with partially observed states is PSPACE-hard. The constructions and arguments in our paper depend significantly on, and are in many cases, the same as the arguments in \cite{TsPaCo}. The output of the channel in this paper is the partial state information in \cite{TsPaCo}. By carefully chosing the reset probabilities from the final decision states, we are able to map the complexity result about Markov decision process in \cite{TsPaCo} to result concerning the hardness of capacity approximation in this paper. The utter simplicity with which a result in control theory can be `transported' into a result in communication theory can be seen.

For the relevant background on complexity theory, see \cite{TsPaCo} and references therein. An understanding of Section 4 in \cite{TsPaCo}, in particular, the statement of Theorem 6 and its proof is needed to understand the proof here. Once this is understood, and the reader has an understanding of information theory and basic probability theory, the proofs presented here can be understood.

\section{Literature survey}

In \cite{TsPaCo}, the complexity of Markov decision processes and partially observed Markov  decision processes has been considered and in \cite{TsPaIn}, the Witsenhausen and team decision problems have been considered. In both these papers, it is proved that there are problems in each category which are hard. See further references therein for problems which have been proved to be hard, especially in a control setting. There are various papers in information theory demostrating hardness of source and channel coding algorithms and code constructions, see, for example \cite{Wit}, \cite{Ber}, and \cite{Lang}. In \cite{Wit}, it has been proved that the generalized Lloyd-Max algorithm is NP-complete. In \cite{Ber}, it has been proved that general decoding problem for linear codes and the general problem of finding the weights of a linear code are both NP-complete. In \cite{Lang},  the problem of encoding complexity of network coding is considered, and one of the results in this paper is that approximating the minimum number of encoding nodes required  for the design of multicast coding networks is NP-hard.

In this paper, it is proved that even just approximately calculating the capacity to within a constant additive positive number is  PSPACE-hard for the problem of Markov channel with perfect feedback, and thus, for a general network, the problem of approximating the capacity region is PSPACE-hard.

\section{The result}

This section contains the channel construction, lemmas and the theorem.

For literature on Markov channels with feedback, the reader is referred to \cite{Tat} and references therein, though this paper is not particularly needed to understand our paper.

\subsection{Channel construction and reliable communication}

Starting from any quantified formula $Q \!=\! \exists x_1 \forall x_2 \exists x_3 \ldots \forall x_n F(x_1, x_2, \ldots, x_n)$ with $n$ variables and $m$ clauses $C_1, C_2, \ldots, C_m$, where $F$ is in conjunctive normal form, we construct a channel as follows:

Channel states $s_0, A_{ij}$, $A'_{ij}$, $T_{ij}$, $T'_{ij}$, $F_{ij}$, $F'_{ij}$; sets $A_j$, $A'_j$, $T_j$, $T'_j$, $F_j$, $F'_j$ $1 \leq i \leq m, 1 \leq j \leq n$ are the same as those in \cite{TsPaCo}. The states $A_{i,n+1}$ are lumped into a single state $A_{n+1}$ and the states $A'_{i,n+1}$ are lumped into a single state $A'_{n+1}$. Also, there is no terminating state, and from $A_{n+1}$, the channel goes back to itself or to $s_0$, and from $A'_{n+1}$, the channel goes back to itself or to $s_0$, in a way described later. 

The state transitions, which as stated above, are exactly the same as in \cite{TsPaCo} (but for the exception stated above). In order to make the paper self-contained, we reproduce the same here.

Based on each clause $C_i$ of $Q$ and variable $x_j$ there are $6$ states, $A_{ij},$ $A'_{ij},$ $T_{ij},$ $T'_{ij},$ $F_{ij},$ $F'_{ij}$. There are also $2$ additional states $A_{n+1}$ and $A'_{n+1}$.

The initial state $s_0$ should be thought of as a set. For each variable $j$, the states $A_{ij}, 1 \leq i \leq m$ form a set $A_i$, and the states $A'_{ij}, 1 \leq j \leq m$ form a set $A'_{ij}$ and similarly, the sets $T_i, T'_i$, $F_i, F'_i$ are defined (this will be the partial state information, as we shall see below).

The channel can transition from $s_0$ to states $A'_{i1}, 1 \leq i \leq m$ with equal probability. If $x_j$ is an existential variable, there are two possible state transitions out of the set $A_j$ leading with certainty from $A_{ij}$ to $T_{ij}$ and $F_{ij}$ respectively, and similarly, there are two transitions out of the set $A'_j$ leading with certainty from $A'_{ij}$ to $T'_{ij}$ and $F'_{ij}$ respectively. If $x_j$ is a universal variable, there is one transition  out of the set $A_j$ leading with equal probability from $A_{ij}$ to $T_{ij}$ and $F_{ij}$ and similarly, there is one transition out of the set $A'_j$ leading with equal probability from $A'_{ij}$ to $T'_{ij}$ and $F'_{ij}$. From the sets $T_j, F_j, T'_j, F'_j$ sets, there is only one transition which leads with certainty from $T_{ij}, F_{ij}, T'_{ij}$ and $F'_{ij}$ to (respectively) $A_{i,j+1}, A_{i, j+1}, A'_{i, j+1}, A'_{i,j+1}$ with two exceptions: If $x_j$ appears positively in $C_i$, the transition from $T'_{ij}$ is to $A_{i,j+1}$ instead of $A'_{i,j+1}$ and if $x_j$ appears negatively, the transition from $F'_{i,j}$ is to $A_{i,j+1}$. When the channel reaches the state $A_{i,n+1}$, as stated above, the states $A_{i,n+1}$ are lumped into a single state $A_{n+1}$, and the channel stays in $A_{n+1}$ with probability $1-p$ and transitions back to the state $s_0$ with probability $p$ for a value of $p$ stated later. Similarly, when the channel reaches state $A'_{i,n+1}$, as stated previously, the states $A'_{i,n+1}$ are lumped into a single state $A'_{n+1}$, and the channel stays at state $A'_{n+1}$ with probability $1-q$ and transitions to the state $s_0$ with probability $q$, for a value of $q$ stated later.

%Channel transitions are exactly as in \cite{TsPaCo}, except that at state $A_{i,n+1}$, the channel stays at state $A_{i,n+1}$ with probability $1-p$ and transitions to state $s_0$ with probability $p$. Similarly, from state $A'_{i,n+1}$, the channel either stays at the state $A'_{i,n+1}$ with probability $1-q$ or transitions to the state $s_0$ with probability $q$.

The state $A_{n+1}$ will be called `good' state and the rest of the states will be called `bad' states. The reason for this will become clear below.

The output of the channel is the partial state information, that is, one of $s_0, A_i,A'_i,T_i,T'_i,F_i,F'_i$, $1 \leq i \leq n$, and $A_{n+1}$, $A'_{n+1}$  along with an output bit either $0$ or $1$ depending on channel input and channel functioning.  Thus, the output space of the channel is $\mathbb O \triangleq \{s_0, \{T_i\}, \{T'_i\},\{F_i\}, \{F'_i\}\{A_{i}\}, \{A'_{i}\}, 1 \leq i \leq n, A_{n+1}, A'_{n+1}\} \} \times \{0,1\}$.

The output of the channel is `fed back' directly to the encoder \emph{without} delay. 

The input to the channel is $\{D_1, D_2\} \times \{0, 1\}$. Intuitively, this should be thought of as a bit being transmitted through $0$ or $1$ and $D_1$ and $D_2$ determine the state transition (note that there are at most 2 possible state transitions out of a state). Of course, another policy can be used instead of transmitting the information bit and the state transition information; the above is just an intuitive way of thinking about the channel input.

The input to the encoder is a sequence of bits, each bit taking a value of $1$ with probability $\frac{1}{2}$ and $0$ with probability $\frac{1}{2}$, along with the output of the channel, which, as stated above, is `fed back' directly to the encoder without delay. At each time, the bit input to the encoder should be thought of as a single bit, the bit which has not yet been communicated. Thus, the input space of the encoder is $\{0, 1\} \times \mathbb O$ $=$
$\{0, 1\} \times  \{s_0, \{T_i\}, \{T'_i\},\{F_i\}, \{F'_i\}\{A_{i}\},\allowbreak \{A'_{i}\}, 1 \leq i \leq n, A_{n+1}, A'_{n+1} \} \times \{0,1\}$.

Based on all past inputs (a bit stream and all past feedback from the channel output), the encoder makes an encoding and `feeds it' into the  channel.

It will be assumed that the channel starts in state $s_0$. Note that if the channel starts in state $s_0$, in $2n+1$ units of time, it reaches either state $A_{n+1}$ or state $A'_{n+1}$.
%Note the following: the functioning of channel state is sequential in the sense that it starts at $s_0$, at time 1, the channel is in some state $X_{i1}$ for some $i$, where $X$ stands for one of $A, A', T, T', F, F'$. At the next time point, the channel state is in some set $X_{i2}$. At time $2n+1$, the channel is in state $X_{i,n+1}$ where $X = A$ or $A'$, where the channel stays with a certain probability ($p$ or $q$) or jumps to $s_0$ and the process con tinues.

For the purpose of understanding, it is best to think of the problem as a sequential problem where a set of bits `enter' the encoder at a certain rate $R$ which causes an encoding  and the channel produces the outputs from which a decoding needs to happen. A rate $R$ is achievable if, for every $\delta$, $\exists t_{\delta}$ such that for $t > t_{\delta}$, $Rt$ bits can be communicated and the average error, that is, $\Pr(\hat B_{tR} \neq B_{tR})$ is less than $\delta$, where $B_{tR}$ denotes the bit input upto time $t$ and $\hat B_{tR}$ is the corresponding decoding.

Let $p=2^{-(mn)^{100}}$ and $q = 2^{-(mn)^{200}}$. %The way these are chosen satisfy at least the following conditions: $p,q \to 0$ as $mn \to \infty$ and $\frac{q}{p} \to 0$ at a sufficiently fast rate as $mn \to \infty$.%

\subsection{Lemmas, theorem and proofs}

As has been stated previously, assume, in what follows, that the channel starts in state $s_0$.

\begin{lemma} Given $\epsilon > 0$. Then, $\exists m_0, n_0$, depending only on $\epsilon$ such that for $m>m_0, n>n_0$, if $Q$ is true, capacity of the channel corresponding to $Q$ is larger than $1-\epsilon$.
\end{lemma}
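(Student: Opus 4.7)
The plan is to construct an explicit feedback encoder, driven by a winning strategy for the existential player in $Q$, that drives the channel into the good state $A_{n+1}$ and then communicates uncoded bits while there. Because $Q$ is true, there is a Skolem-style map assigning, to each valuation of the universally quantified variables, a choice of each existentially quantified variable making every clause $C_i$ true. I would translate this into a channel policy: whenever the instantaneous feedback places the chain in a set $A_j$ (or $A'_j$) corresponding to an existential variable, the encoder uses the $\{D_1,D_2\}$ component of its input to force the transition dictated by that assignment, while in sets corresponding to universal variables it plays arbitrarily. By construction, and by exactly the argument in the proof of Theorem~6 of \cite{TsPaCo}, every trajectory starting at $s_0$ deterministically reaches $A_{n+1}$ after $2n+1$ steps and never visits $A'_{n+1}$.

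Once in $A_{n+1}$, the chain stays there for a geometric time of mean $1/p = 2^{(mn)^{100}}$ before resetting to $s_0$. During each such step the encoder places the next fresh source bit on the $\{0,1\}$ component of the channel input, and I would argue the decoder recovers it exactly: the feedback reveals that the state is $A_{n+1}$, so the output bit functions as a noiseless copy of the input bit. In the $2n+1$ bad steps used to renavigate the formula the encoder transmits nothing, a deliberately pessimistic choice. A full excursion from $s_0$ back to $s_0$ therefore has expected length $2n+1+1/p$ while carrying at least $1/p$ reliably communicated bits, so the renewal-reward theorem gives a long-run information rate of at least
\[
\frac{1/p}{2n+1+1/p}\;=\;\frac{1}{1+(2n+1)p}.
\]
For the stipulated $p=2^{-(mn)^{100}}$, this quantity exceeds $1-\epsilon$ whenever $(2n+1)p<\epsilon/(1-\epsilon)$, which holds as soon as $m,n$ are larger than a threshold $m_0,n_0$ depending only on $\epsilon$.

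The main obstacle I anticipate is promoting this heuristic rate bound to an achievable rate in the Shannon sense. I would buffer source bits arriving at rate $R$ strictly below $\frac{1}{1+(2n+1)p}$ during the bad phases and drain the buffer during the good phase, choosing the block length $t$ to be a large multiple of $1/p$ so that the number of completed excursions concentrates around its mean by Chebyshev or Chernoff; because $1/p$ is astronomically large the forced block length is enormous, but the lemma is already asymptotic in $t$, so this is acceptable. What must be verified carefully is that both the buffer-overflow probability and the residual decoding error vanish as $t\to\infty$ for any $R$ strictly below the ratio above, which is a standard but slightly tedious ergodic-coding argument for the positive recurrent Markov chain induced by the policy. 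Everything else is essentially the easy (\emph{if $Q$ is true}) half of the Tsitsiklis--Papadimitriou construction transplanted to the feedback-channel setting.
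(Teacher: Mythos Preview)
Your argument is correct and follows essentially the same route as the paper: use the winning strategy from \cite{TsPaCo} to guarantee the chain reaches $A_{n+1}$ in $2n+1$ steps, transmit one noiseless bit per step while there (expected duration $1/p$), and invoke ergodic/renewal considerations to conclude that the achievable rate exceeds $1-\epsilon$ once $m,n$ are large. The paper's own proof is a terser sketch of exactly this---it phrases the time-fraction computation as a stationary-distribution bound and simply appeals to ``ergodic theory for Markov chains'' where you spell out the renewal-reward and buffering details.
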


\begin{proof}
  By \cite{TsPaCo}, $Q$ is true implies that we can choose the channel input (decisions in \cite{TsPaCo}) so that we always end up in $A_{n+1}$, not $A'_{n+1}$. The transitions from $s_0$ to $A_{n+1}$ takes $2n+1$ units of time where at worst, no bits can be communicated, and the channel stays in state $A_{n+1}$ for an average of order of magnitude $2^{(mn)^{100}}$ number of transitions, where $1$ bit is transmitted noiselessly per channel use. By these considerations, it follows that given any $\kappa > 0$, $\exists$ $m_1, n_1$ such that for $m>m_1$, $n>n_1$, the stationary distribution of the state $A_{n+1}$ of the Markov chain with the above chosen channel inputs is $>1-\kappa$. By use of the ergodic theory for Markov chains, the lemma follows. 
\end{proof}

\begin{lemma}
 Given $\alpha > 0$. Then, $\exists m_0, n_0$ sufficiently large, depending only on $\alpha$  such that for $m>m_0, n>n_0$, the capacity of the channel corresponding to the formula $Q$ is larger than $\alpha$ implies that $Q$ is true. 
\end{lemma}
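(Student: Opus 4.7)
The plan is to prove the contrapositive: if $Q$ is false, then for all sufficiently large $m,n$ the capacity of the constructed channel is at most $\alpha$. The driving intuition is that $A_{n+1}$ is the only state from which message bits can be reliably transmitted, while the extreme asymmetry $q\ll p$ between the escape probabilities from $A_{n+1}$ and $A'_{n+1}$ forces the long-run fraction of time spent in $A_{n+1}$ to be astronomically small whenever there is any non-negligible chance of ending a descent in $A'_{n+1}$.

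First I would invoke Theorem~6 of \cite{TsPaCo}: when $Q$ is false, no (possibly history-dependent, randomized) choice of channel inputs during the $2n+1$ steps from $s_0$ to a terminal state can push the probability of ending in $A_{n+1}$ above $1-\gamma$, where $\gamma>0$ depends only on the construction; tracing the uniform choices at the universal variables and at the initial clause selection out of $s_0$ gives, say, $\gamma\ge 1/(m\,2^n)$. Every fresh descent from $s_0$ starts from the same partial-state information, so the encoder's memory of earlier cycles cannot beat a POMDP policy on the current cycle, and the bound applies cycle by cycle.

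Next I would observe that no information is conveyed outside visits to $A_{n+1}$: at each of $s_0, A_{ij}, A'_{ij}, T_{ij}, T'_{ij}, F_{ij}, F'_{ij}$ the successor state is either deterministic or (at universal variables and at $s_0$) independent of the channel input, and $A'_{n+1}$ is by construction a `bad' state. Hence the capacity is upper-bounded by the long-run expected fraction of time spent in $A_{n+1}$, and a standard renewal-reward computation closes the argument. Each cycle consists of $2n+1$ deterministic steps from $s_0$ to a terminal state plus a geometric sojourn at that terminal state of mean $1/p=2^{(mn)^{100}}$ (if $A_{n+1}$) or $1/q=2^{(mn)^{200}}$ (if $A'_{n+1}$), so with $\pi$ the per-cycle probability of reaching $A_{n+1}$,
\[
\text{fraction of time in }A_{n+1}\;\le\;\frac{\pi/p}{\pi/p+(1-\pi)/q}\;\le\;\frac{1/p}{\gamma/q}\;=\;\frac{q}{\gamma\,p}\;\le\;m\,2^{\,n+(mn)^{100}-(mn)^{200}},
\]
which tends to $0$ super-polynomially in $m,n$ and hence is below $\alpha$ once $m,n$ are large enough.

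The main obstacle I anticipate is carefully justifying the per-cycle lower bound $1-\pi\ge\gamma$ against an encoder whose state can depend on arbitrary earlier feedback: this must be handled by the uniformity of the POMDP bound in \cite{TsPaCo} over all decision strategies (so any cross-cycle memory is subsumed by the class of policies already considered there), combined with an ergodic-theorem step converting the per-cycle expectation into the almost-sure time-average that upper bounds capacity. A smaller technical item is to confirm that the channel's output bit cannot covertly carry the message outside $A_{n+1}$; this follows from the stipulation that the non-$A_{n+1}$ states are `bad', i.e., the input--output mapping there is independent of the message bit, so conditional on the state sequence the effective channel during those epochs has zero capacity.
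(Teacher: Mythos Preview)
Your overall approach matches the paper's: argue the contrapositive, invoke the POMDP bound from \cite{TsPaCo} to get a per-cycle probability at least $\gamma\ge 2^{-n}/m$ of landing in $A'_{n+1}$ whenever $Q$ is false, and then do a renewal-reward comparison of the sojourn means $1/p$ and $1/q$. Your discussion of why cross-cycle encoder memory cannot beat the POMDP bound is in fact more careful than the paper's.

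There is, however, one genuine gap. Your assertion that ``no information is conveyed outside visits to $A_{n+1}$'' is not correct, and the reasoning you give for it does not establish it. At an \emph{existential} variable $x_j$, the transition out of $A_{ij}$ (resp.\ $A'_{ij}$) is deterministic \emph{as a function of the input}: the encoder's choice $D_1$ versus $D_2$ selects $T_{ij}$ versus $F_{ij}$, and the decoder then sees the partial-state symbol $T_j$ versus $F_j$. A deterministic input-to-output map is a \emph{noiseless} sub-channel, not a zero-capacity one, so the encoder can in principle spend those existential choices to transmit message bits (at the price of giving up any attempt to steer toward $A_{n+1}$, which is already hopeless when $Q$ is false). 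Consequently the capacity is \emph{not} upper-bounded simply by the time-fraction spent in $A_{n+1}$.

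The paper handles this differently: it does not claim the transit is information-free, but instead upper-bounds the information conveyed during the $2n{+}1$ transit steps by the crude output-cardinality bound $\log\lceil 6mn+3\rceil$ per cycle, and then notes that this contribution, divided by the expected cycle length (dominated by $\gamma/q \ge 2^{(mn)^{200}-n}/m$), still vanishes. Your renewal-reward inequality is easily repaired the same way: add a term of order $(2n{+}1)\log(\text{output alphabet size})$ to the numerator reward and observe that it is swamped by the $\gamma/q$ term in the denominator. With that correction your argument goes through and is essentially the paper's.
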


\begin{proof}
If there was some way for the channel to enter the state $A'_{n+1}$ irrespective of the decisions, this would happen with probability at least $\frac{2^{-n}}{m}$ (see \cite{TsPaCo}), and then, the channel stays in this state for an average of an order of magnitude of $2^{(mn)^{200}}$ amount of time (1 unit of time refers to one state transition). Note that there is no transmission of information possible in state $A'_{n+1}$. Even if the channel ended in $A_{n+1}$ with the rest of the probability $1-\frac{2^{-n}}{m}$, the average order of magnitude amount of time the channel stays in state $A_{n+1}$ is $2^{(mn)^{100}}$ which is `much less' than $2^{(mn)^{200}}\frac{2^{-n}}{m}$. Also, there is the $2n+1$ units of time when the channel transitions from $s_0$ to $A_{n+1}$ or $A'_{n+1}$, which is `negligible'.  It follows that given any $\lambda > 0$, $\exists m_2, n_2$ such that for $m > m_2$, $n > n_2$, the fraction of time the channel spends in states $A'_{n+1}$ is $>(1-\lambda)$ with high probability. Finally, note that the amount of transmission of information during the $(2n+1)$ units of time when the channel transitions from $s_0$ to $A_{n+1}$ or $A'_{n+1}$ is at most $\log \lceil 6mn+3 \rceil$. This is because the output space to the channel has cardinality $6mn+3$.  By taking into account the above numbers, it follows, then, that if the channel could enter $A'_{n+1}$, there exist $m_0, n_0$ sufficiently large such that the capacity of the channel will be less than $\alpha$ which will contradict the assumption on the channel. It follows, then, that there is a set of decisions (channel inputs) for which the channel never enters the state $A'_{n+1}$ which implies, by \cite{TsPaCo}, that the formula is true. 
\end{proof}

\begin{thm} \label{TheoremMain}
Computing the capacity of this set of Markov channels (the set of channels formed by taking a channel corresponding to each formula $Q$ where $m,n$ and the particular formula can be any positive integers) when perfect feedback is available, to within an accuracy of 0.1 bits per channel use, is PSPACE hard.
\end{thm}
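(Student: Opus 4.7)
The plan is to combine Lemma 1 and Lemma 2 to obtain a polynomial-time reduction from quantified boolean formula validity (QBF), which is PSPACE-complete, to $0.1$-bit approximation of the capacity of the constructed Markov channels. This is precisely the pattern used in Theorem 6 of \cite{TsPaCo}: the two lemmas give a two-sided gap between the ``true $Q$'' and ``false $Q$'' capacity values, and any sufficiently fine additive approximation must distinguish them.

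First I would fix constants $\epsilon$ and $\alpha$ with $\epsilon + \alpha < 0.8$, say $\epsilon = \alpha = 0.3$. Let $m_1, n_1$ be the thresholds from Lemma 1 for this $\epsilon$, and $m_2, n_2$ the thresholds from Lemma 2 for this $\alpha$; set $m_0 = \max(m_1, m_2)$ and $n_0 = \max(n_1, n_2)$. For every formula $Q$ with $m > m_0, n > n_0$: if $Q$ is true, Lemma 1 gives capacity $C \ge 1 - \epsilon = 0.7$; if $Q$ is false, the contrapositive of Lemma 2 gives $C \le \alpha = 0.3$. Since the two regimes are separated by $0.4 > 2 \cdot 0.1$, any estimate $\hat C$ of $C$ with $|\hat C - C| \le 0.1$ exceeds $0.5$ iff $Q$ is true. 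Thus an algorithm that approximates capacity within $0.1$ bits decides QBF on all formulas large enough.

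Second I would handle the finitely many ``small'' formulas with $m \le m_0$ or $n \le n_0$. Since $m_0, n_0$ depend only on the fixed $\epsilon, \alpha$, they are absolute constants, so I can pad a given QBF instance by conjoining a block of tautological clauses over fresh universally quantified dummy variables. Such padding does not change the truth value of $Q$ and increases $m$ and $n$ to any desired values with only a polynomial blow-up in the instance size. The channel associated to the padded formula has $O(mn)$ states and its transition and reset data (including $p = 2^{-(mn)^{100}}$ and $q = 2^{-(mn)^{200}}$, which have polynomial bit-length descriptions as exponents) can be produced in polynomial time. Composing this polynomial-time reduction with an oracle for $0.1$-approximation of capacity yields a polynomial-time decision procedure for QBF, which establishes PSPACE-hardness.

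The main conceptual hurdle, rather than any calculation, is ensuring the reduction is genuinely polynomial in the input length: one must check that the reset probabilities $p$ and $q$, being doubly small, are still representable succinctly (only their exponents $(mn)^{100}$ and $(mn)^{200}$ must be written down), and that the padding respects the QBF-based hardness construction of \cite{TsPaCo} so that the padded formula is still a valid PSPACE-hard instance. Once those bookkeeping details are in place, the theorem follows immediately from the two lemmas and the PSPACE-completeness of QBF cited through Theorem 6 of \cite{TsPaCo}.
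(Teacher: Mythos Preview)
Your proposal is correct and follows essentially the same argument as the paper: combine the two lemmas to produce a capacity gap (the paper phrases it as ``less than $0.2$ or larger than $0.8$'' where you use $0.3$ and $0.7$) and observe that any $0.1$-accurate estimate distinguishes the two cases, thereby deciding QBF. Your treatment is in fact more careful than the paper's short proof, which does not explicitly address the padding of small instances or the succinct representability of $p$ and $q$.
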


\begin{proof}
If the capacity of this set of Markov channels with feedback could be computed to within an accuracy of $0.1$, it would be known whether the capacity of the channel is less than $0.2$ or larger than $0.8$. This would imply, from the previous lemmas, that we would know, for sufficiently large $m,n$, whether $Q$ is true or not, and this problem is PSPACE hard.
\end{proof}

\section{Comments}

\begin{comment}
It has been assumed that the channel starts in state $s_0$. This is only for simplicity of presentation. Minor modifications can be made in order to make the channel start in any state.
\end{comment}

\begin{comment}
In addition, a transition from state $A_{i,n+1}$ to $A'_{i,n+1}$ with a probability $2^{-(mn)^{500}}$ could be added, and thus, from state $A_{i,n+1}$ to state $s_0$ with probability $1-2^{-(mn)^{500}} - 2^{-(mn)^{100}}$ to make the picture a little more realistic. Also, the bound in Lemma 2 concerning the  information transmission from input to output, is $\log \lceil 6nm + 3 \rceil$; however, a bound should also be possible on the cardinality of the input space of the channel.
\end{comment}

\begin{comment}
There is nothing special about the number $0.1$ in Theorem~\ref{TheoremMain}. Also, $p$ and $q$ need not be doubly exponential. Doubly exponentials work, and for that reason, they have been chosen this way.
\end{comment}

\begin{comment}
\emph{PSPACE} hardness implies \emph{NP} hardness and thus, the problem dealt with in this paper is also \emph{NP} hard.
\end{comment}

\begin{comment}
A block-coding model can be considered instead of a sequential model. For the purpose of intuition, it is best to think of a sequential model.
\end{comment}

\section{Importance of the result}

In information theory, one important research direction is to find single-letter characterizations, appropriately defined, for capacity regions of networks. However, it is not concretely known whether single letter characterizations exist in general. An example is the two-way channel \cite{S2way}. In order to prove that in general, there is no single letter characterization, all one needs is an example for which it is not possible to get one.

This paper takes a different view-point and firmly establishes a limit from the view-point of complexity theory by providing an example for which approximating the capacity of a general network is indeed hard in the sense of PSPACE-hardness.

\section{Recap and research directions}

A class of Markov channels with perfect feedback  was constructed for which it was proved that approximating the capacity to within 0.1 bits is PSPACE hard.

It would be worthwhile exploring the application of this proof idea to other channels with potentially noisy feedback, channels with perfect state information, and networks in general.

It would be helpful to see whether this result puts restrictions on the kind of single-letter characterizations there may exist for capacity regions of networks; for example, this paper will rule out certain characterizations which can be approximated in a way that is not PSPACE-hard.

The simplicity with which a result in which a result in control theory has been used to prove a result in information theory may be noted and further possibilities of the same may be explored.

\section{Acknowledgments}

The author thanks Prof. John Tsitsiklis for suggesting his paper \cite{TsPaCo} which led to proving the result in this paper. The author also thanks Dr. Tom Richardson for carefully reviewing this paper and suggesting extensive changes in the writing of the proof and encouraging the author to submit the paper. The author thanks Prof. Vincent Tan for helpful discussions. Finally, the author thanks Prof. Sanjoy Mitter for suggesting that the author talk to Prof. John Tsitsiklis and Dr. Tom Richardson, for looking through the first version of this paper, and for general encouragement.

%The author thanks Prof. John Tsitsiklis for insightful discussions. The author also thanks Dr. Tom Richardson and Prof. Sanjoy Mitter for comments on the writing of the paper and other technical discussions.

\address{Department of Electrical and Computer Engineering \\
Boston University, Boston, MA 02215, USA\\
\email{magar@alum.mit.edu}}


\begin{thebibliography}{10}

  
\bibitem{TsPaCo}
	C.~H. Papadimitriou and J.~N. Tsitsiklis, {\em The complexity of Markov Decision Processes}, {Mathematics of Operations Research} {\bf 12} (1987), no.~3.



\bibitem{TsPaIn}
	C.~H. Papadimitriou and J.~N. Tsitsiklis, {\em Intractable problems in control theory},
{SIAM Journal of Control and Optimization} {\bf  24} (1986), no.~4.


\enlargethispage{1em}
\bibitem{Wit}
	M. R. Garey, D.~S. Johnson, and H.~S. Witsenhausen, {\em The complexity of the generalized Lloyd-Max problem}, {IEEE Transactions on Information Theory} {\bf IT-28} (1982), no.~2.

\bibitem{Ber}
	E. Berlekamp, R. McEliece, and H. van Tilborg, {\em On the inherent intractability of certain coding problems}, {IEEE Transactions on Information Theory}  {\bf 24} (1978), no.~3.

\bibitem{Lang}
	M. Langberg, A. Sprintson, and J. Bruck, {\em The encoding complexity of network coding}, {IEEE Transactions on Information Theory} {\bf 52} (2006), no.~6.

\bibitem{Tat}
	S. Tatikonda and S. K. Mitter, {\em Capacity of channels with feedback}, {IEEE Transactions on Information Theory} {\bf 55} (2009), no.~11.



\bibitem{S2way}
	C. E. Shannon, {\em Two-way communication channels}, {Proceedings of the Fourth Berkeley Symposium on Mathematical Statistics and Probability, Volume 1: Contributions to the Theory of Statistics}, University of California Press, 1961, pp.~611--644. 
\end{thebibliography}
\end{document}